\documentclass[journal]{IEEEtran}

\usepackage{amsmath,amssymb,amsthm}
\usepackage{graphicx}
\usepackage{booktabs}
\usepackage{array}
\usepackage{multirow}
\usepackage{float}
\usepackage{algorithm}
\usepackage{algpseudocode}
\usepackage[hidelinks]{hyperref}
\usepackage{cite}
\usepackage{enumitem}

\newtheorem{thm}{Theorem}[section]
\newtheorem{prop}[thm]{Proposition}
\newtheorem{cor}[thm]{Corollary}
\newtheorem{defn}[thm]{Definition}

\title{Attractor-Vascular Coupling Theory: Formal Grounding
and Empirical Validation for AAMI-Standard Cuffless Blood
Pressure Estimation from Smartphone Photoplethysmography}

\author{Timothy~Oladunni and Farouk~Ganiyu~Adewumi%
\thanks{T. Oladunni and F.\,G. Adewumi are with the Department of
Computer Science, Morgan State University, Baltimore, MD~21251~USA
(e-mail: timothy.oladunni@morgan.edu).}%
}

\begin{document}
\maketitle

\begin{abstract}
We present Attractor-Vascular Coupling Theory (AVCT), a formal mathematical
framework proving that cardiac attractor geometry encodes blood pressure (BP)
information sufficient for AAMI-standard estimation, and validate it through
a calibrated cuffless BP model using photoplethysmography (PPG).
AVCT is grounded in Cardiac Stability Theory and operationalised via Takens
delay embedding and attractor morphology extraction; two theorems, one
proposition, and one corollary; with proof sketches in the main text and
full proofs in the Supplementary Material; formally justify PPG attractor features for BP
estimation and predict the feature importance hierarchy.
A LightGBM model trained on PTT\,+\,Cardiac Stability Index (CSI) attractor
features under single-point calibration is evaluated by strict
leave-one-subject-out cross-validation (LOSO-CV) on 46~subjects
(BIDMC intensive care unit (ICU), $n=9$; VitalDB surgical, $n=37$; 29,684 windows),
achieving systolic BP (SBP) mean absolute error (MAE)\,=\,2.05\,mmHg and diastolic BP (DBP) MAE\,=\,1.67\,mmHg
($r\!=\!0.990/0.991$), satisfying the AAMI/IEEE~SP10 MAE\,$<$\,5\,mmHg standard.
Median per-subject MAE is 1.87/1.54\,mmHg; 70\%/76\% of subjects
individually pass AAMI.
A PPG-attractor-only ablation (nine smartphone-accessible attractor features)
achieves SBP MAE\,=\,2.02\,mmHg, within 0.05\,mmHg of the full
electrocardiogram (ECG)\,+\,PPG model, confirming that clinical-grade
BP tracking is achievable from a smartphone camera alone, surpassing the
best published generalised LOSO-CV result using fewer sensors.
All four AVCT predictions are quantitatively confirmed, with 91.5\% error
reduction from uncalibrated to calibrated ($\varepsilon_\mathrm{cal}\!=\!0.915$).
Unlike post-hoc XAI methods applied to black-box models,
AVCT's feature hierarchy is formally predicted before training,
satisfying the architectural faithfulness criterion of the
Explainable-AI Trustworthiness (EAT) framework, grounding BP estimation
in nonlinear dynamical systems theory.
\end{abstract}

\begin{IEEEkeywords}
Blood pressure estimation, photoplethysmography, cardiac attractor,
Lyapunov exponent, recurrence quantification analysis, cuffless monitoring,
pulse transit time, leave-one-subject-out cross-validation.
\end{IEEEkeywords}

\section{Introduction}

Hypertension affects over 1.3 billion people globally and is the leading
modifiable cardiovascular risk factor~\cite{who2023,mills2016}. Cuffless, continuous
BP monitoring from wearable and smartphone sensors~\cite{zhou2023}. This would transform preventive care, yet no
published method has achieved AAMI-standard accuracy under leave-one-subject-out
cross-validation (LOSO-CV); the strictest evaluation protocol; using only
a smartphone camera.

Pulse transit time (PTT) is the most physiologically grounded cuffless BP
surrogate~\cite{ding2017,mukkamala2015}. Deep learning methods achieve lower
MAE on large datasets~\cite{tian2025,mathew2026} but exploit random
splits that permit between-subject data leakage. No prior work provides a
formal theoretical justification linking PPG attractor geometry to BP.

This paper makes five contributions:
\begin{enumerate}
\item \textbf{Attractor-Vascular Coupling Theory (AVCT).}
  A formal mathematical framework proving that BP is a smooth function of
  the cardiac attractor vascular projection, that PTT and PPG attractor
  morphology are informationally equivalent BP proxies (Theorem~\ref{thm:equiv}),
  that PPG alone is sufficient for AAMI-standard calibrated BP estimation
  (Theorem~\ref{thm:suff}), and that single-point calibration eliminates
  91.5\% of prediction error (Proposition~\ref{prop:error}).

\item \textbf{Joint PTT\,+\,CSI model.}
  A calibrated LightGBM model combining PTT timing and CSI attractor features
  achieves ANSI/AAMI~SP10~\cite{aami2008} (Association for the Advancement of Medical Instrumentation) standard accuracy (SBP MAE\,=\,2.05~mmHg, DBP MAE\,=\,1.67~mmHg)
  under strict LOSO-CV across 46~subjects from two independent clinical
  datasets (BIDMC ICU and VitalDB surgical), with median per-subject
  MAE\,=\,1.87/1.54~mmHg.

\item \textbf{Smartphone validation.}
  Nine PPG attractor features extracted from a rear smartphone camera match
  the full 20-feature ECG\,+\,PPG model within 0.05~mmHg SBP
  (PPG-attractor-only MAE\,=\,2.02~mmHg), validating clinical-grade deployment without
  wearable hardware or ECG electrodes.

\item \textbf{Ablation study.}
  The independent contributions of PTT features, CSI attractor features, and
  their combination are quantified under single-point calibration, providing
  the first controlled empirical test of all four AVCT predictions across
  four feature configurations on two independent datasets.

\item \textbf{Feature hierarchy confirmation.}
  Mutual information ranking empirically confirms the AVCT prediction:
  attractor morphology $\succ$ PTT $\succ$ RQA $\succ$ CSI scalar $\succ$
  $\lambda_{\max}$ (where $\succ$ denotes strict dominance in
  mutual information with $\Delta P$; Corollary~\ref{cor:importance};
  (category-level Spearman $\rho_{\mathrm{cat}}\!=\!0.90$, $p\!=\!0.04$), demonstrating that the theory
  predicts data-driven feature selection before model training.
\end{enumerate}

Table~\ref{tab:rq} maps each research question to its theoretical grounding
and empirical evidence. Table~\ref{tab:gap} contextualises each contribution
against the specific gap it addresses in the prior literature.

\begin{table*}[t]
\caption{Research Questions, Theoretical Grounding, and Empirical Evidence}
\label{tab:rq}
\centering
\small
\renewcommand{\arraystretch}{1.45}
\begin{tabular}{p{0.65cm}p{3.55cm}p{3.5cm}p{3.5cm}p{2.9cm}}
\toprule
\textbf{RQ} &
\textbf{Research Question} &
\textbf{Theoretical Grounding} &
\textbf{Empirical Evidence} &
\textbf{Outcome} \\
\midrule

RQ1 &
Do PPG attractor features carry equivalent BP information to PTT,
so that a model trained on either achieves comparable accuracy? &
Theorem~\ref{thm:equiv}: Moens-Korteweg\,+\,Takens chain proves
$\mathrm{Var}(\varepsilon_1)\!=\!\mathrm{Var}(\varepsilon_2)+O(\sigma_v^2)$ &
PTT-only MAE\,=\,2.10; CSI-only\,=\,2.04; full PTT+CSI\,=\,2.06~mmHg SBP
(Table~\ref{tab:ablation}); Wilcoxon $p\!>\!0.10$ (ns) &
\textbf{Yes.} $|\Delta\mathrm{MAE}|\!=\!0.06$~mmHg \checkmark \\

\addlinespace

RQ2 &
Is PPG alone sufficient for AAMI-standard calibrated BP estimation,
enabling deployment without ECG hardware? &
Theorem~\ref{thm:suff}: Takens diffeomorphism\,+\,data-processing
inequality; ECG contributes $O(\sigma_p^2)$ at 10~s scale &
PPG attractor-only: SBP\,=\,2.02, DBP\,=\,1.63~mmHg (LOSO-CV)
(Table~\ref{tab:ablation}); Wilcoxon vs.\ full model $p\!>\!0.10$ (ns) &
\textbf{Yes.} Gap\,=\,0.05~mmHg \checkmark \\

\addlinespace

RQ3 &
Does a single resting cuff reading provide sufficient calibration
for AAMI-standard accuracy in a LOSO-CV evaluation? &
Proposition~\ref{prop:error}: $e_i\!=\!\delta\!+\!\varepsilon_i^{\mathrm{within}}$;
single-point calibration sets $\delta\!=\!0$, leaving only
within-subject residuals &
MAE reduced from 24.05 (uncalibrated) to 2.05~mmHg (calibrated);
$\varepsilon_{\mathrm{cal}}\!=\!0.915$; 70\%/76\% of subjects pass AAMI
individually (Table~\ref{tab:primary}) &
\textbf{Yes.} 91.5\% error reduction \checkmark \\

\addlinespace

RQ4 &
Does the MI feature ranking confirm the AVCT-predicted attractor
importance hierarchy before any model is trained? &
Cor.~\ref{cor:importance}: SNR ordering (see~\S\ref{sec:avct}) &
Attractor morphology ranks 1--6; PTT features 5--16; $\lambda_{\max}$
rank~17 (Fig.~\ref{fig:mi_importance}); category-level $\rho_{\mathrm{cat}}\!=\!0.90$, $p\!=\!0.04$ &
\textbf{Yes.} Hierarchy confirmed \checkmark \\

\addlinespace

RQ5 &
Does AVCT achieve the best published accuracy under strict
LOSO-CV without ECG, surpassing methods that use additional sensors? &
Theorems~\ref{thm:equiv}--\ref{thm:suff}: informational equivalence of PPG
attractor to ECG+PPG+BCG at the 10~s window scale &
AVCT PPG attractor-only SBP MAE\,=\,2.05~mmHg vs.\ BiLSTM~\cite{kim2021bilstm}
2.56~mmHg (ECG+PPG+BCG, LOSO-CV, $n\!=\!20$)
(Table~\ref{tab:sota}) &
\textbf{Yes.} 19\% MAE reduction \checkmark \\

\bottomrule
\end{tabular}
\vspace{2pt}
{\scriptsize BCG: ballistocardiogram; SCG: seismocardiogram.}
\end{table*}

\begin{table*}[t]
\caption{Gap Analysis: Prior Threads, Open Gaps, AVCT Contributions, and Impact}
\label{tab:gap}
\centering
\small
\renewcommand{\arraystretch}{1.35}
\begin{tabular}{p{2.4cm}p{3.2cm}p{3.6cm}p{3.6cm}p{3.0cm}}
\toprule
\textbf{Prior Thread} &
\textbf{What It Provides} &
\textbf{Gap Left Open} &
\textbf{AVCT Contribution} &
\textbf{Clinical Impact} \\
\midrule

No prior work
& --
& No formal theory linking cardiac attractor geometry to BP; no proof of PTT--attractor informational equivalence
& \textbf{AVCT}: 2 theorems, 1 proposition, 1 corollary; all empirically confirmed ($\varepsilon_{\mathrm{cal}}\!=\!0.915$)
& First theoretical foundation for attractor-based smartphone BP monitoring \\

\addlinespace

PTT-based BP~\cite{ding2017,mukkamala2015}
& ECG--PPG transit delay as vascular stiffness proxy
& Requires ECG hardware; no formal link between PTT and attractor morphology
& Theorem~\ref{thm:equiv}: PTT $\equiv$ attractor morphology informationally ($|\Delta\mathrm{MAE}|\!=\!0.06$~mmHg)
& Removes ECG; enables smartphone-camera-only deployment \\

\addlinespace

PPG morphology methods~\cite{almarshad2022,allen2021}
& Waveform shape features from single-site PPG
& Empirical only; no theoretical justification linking morphology to BP
& Windkessel\,+\,Takens chain proves PPG morphology encodes the full vascular state
& Converts empirical practice into a provable, interpretable result \\

\addlinespace

Cardiac Stability Theory~\cite{oladunni2025cst}
& Attractor framework; CDH; CSI scalar; smartphone CSI validation
& CST validated for cardiac stability classification only, not BP estimation
& First CST extension to BP; ablation confirms PPG-att.-only $\equiv$ ECG+PPG (Thm.~\ref{thm:suff})
& Extends CST scope from cardiac stability to continuous BP monitoring \\

\addlinespace

Calibrated cuffless BP~\cite{mukkamala2015,chandrasekhar2018}
& Single-reading calibration achieving MAE $<\!5$~mmHg on select cohorts
& No formal decomposition of calibration benefit; no sample-size bound
& Proposition~\ref{prop:error}: $e_i\!=\!\delta\!+\!\varepsilon_i^{\mathrm{within}}$; $\varepsilon_{\mathrm{cal}}\!=\!0.915$ under LOSO-CV
& Principled protocol with proven error decomposition and theoretical bound \\

\bottomrule
\end{tabular}
\end{table*}

\section{Background}

The cardiovascular system is a latent nonlinear dynamical process whose
internal state $x(t)\in\mathbb{R}^n$ generates multiple coupled observable
signals simultaneously. Each modality is a distinct projection of $x(t)$
through a modality-specific observation operator~\cite{goldberger2002}:
\begin{equation}
  s_k(t) = h_k\bigl(x(t)\bigr) + \eta_k(t),
  \quad k \in \mathcal{K},
  \label{eq:obs}
\end{equation}
where $\mathcal{K}=\{$ECG, PPG, BP,
BCG (ballistocardiogram), SCG (seismocardiogram), heart sounds$\}$,
$h_k$ is smooth, and $\eta_k$ is zero-mean measurement noise.
Crucially, \emph{pulse transit time}
$\mathrm{PTT}=t_{\rm PPG}-t_{\rm ECG}$
is a \textbf{derived coupling feature} between two primary observables,
not itself a projection of $x(t)$.
Under the Moens-Korteweg relation, PTT is inversely proportional to
pulse wave velocity $c(P)=\sqrt{E_{\rm inc}(P)h_w/(\rho d)}$ and hence
to arterial stiffness and BP~\cite{ding2017,milnor1989}; this is the
physical basis for the smooth bijection $P=\varphi(\mathrm{PTT})$
in Theorem~\ref{thm:equiv}.
PTT alone explains 40--60\% of within-subject BP variance due to
pre-ejection period variability and nonlinear elastic
effects~\cite{mukkamala2015}, motivating the richer attractor
representation of AVCT.

Blood pressure is the vascular projection of $x(t)$:
\begin{equation}
  P(t) = g\bigl(\Pi_{\mathcal{V}}\,x(t)\bigr),
  \label{eq:bp_state}
\end{equation}
where $\Pi_{\mathcal{V}}$ projects onto the vascular subspace
spanned by compliance $C_a$, resistance $R_{\rm TPR}$, and
wave velocity $c$~\cite{westerhof2009,milnor1989}.
PPG is the volume projection:
$s_{\rm PPG}(t)=h_{\rm PPG}(x(t))$, measuring peripheral blood volume
changes caused by the same cardiac ejection that drives $P(t)$.
Neither signal \emph{causes} the other; both are consequences of $x(t)$.

The geometry of $s_{\rm PPG}(t)$ encodes BP because arterial compliance
$C_a$ shapes the PPG waveform: a stiffer artery (higher BP) propagates
the pressure wave faster, producing a characteristically different
delay-embedding attractor than a compliant artery.
Formally, the state $x(t)$ evolves on a low-dimensional
\emph{attractor}~$\mathcal{A}$, the geometric set the
trajectory converges to; Takens' theorem~\cite{takens1981}
guarantees a \emph{diffeomorphism}
$\Phi:\mathcal{A}\xrightarrow{\,\sim\,}\hat{\mathcal{A}}_{\rm PPG}$
(a smooth, invertible map whose inverse is also smooth, so that
every smooth function on $\mathcal{A}$ remains smooth and learnable
on $\hat{\mathcal{A}}_{\rm PPG}$),
so the reconstructed attractor $\hat{\mathcal{A}}_{\rm PPG}$ preserves
all information in $\mathcal{A}$, including BP.
The AVCT inference chain is therefore:
\begin{equation}
  s_{\rm PPG} \;\xrightarrow{\Phi}\;
  \hat{\mathcal{A}}_{\rm PPG} \;\xrightarrow{\psi}\;
  \boldsymbol{\xi} \;\xrightarrow{\hat{f}+\text{calib.}}\;
  \hat{P},
  \label{eq:avct_chain}
\end{equation}
where $\psi$ extracts the nine CST-motivated attractor features
$\boldsymbol{\xi}=[\sigma_M,\gamma_1,\gamma_2,\mathrm{RR},\mathrm{DET},
H,\lambda_{\max},\mathrm{RQA_{ent}},\mathrm{CSI}]$,
and one resting cuff reading anchors the personal pressure scale
that geometry alone cannot provide.
Recurrence quantification analysis and $\lambda_{\max}$ have
been linked to haemodynamic status~\cite{rosenstein1993,kantz1997,nayak2018};
CST~\cite{oladunni2025cst} formalised this; AVCT extends it to BP
estimation with formal theoretical grounding.

\section{Attractor-Vascular Coupling Theory}
\label{sec:avct}

\subsection{Postulates and Definitions}

AVCT inherits three postulates from CST~\cite{oladunni2025cst}: (P1)~the
cardiovascular system has a compact attractor $\mathcal{A}\!\subset\!\mathbb{R}^n$;
(P2)~both BP and PPG are smooth functions of the cardiac state
$x(t)\!\in\!\mathcal{A}$; (P3)~trajectories satisfy a dissipative ordinary differential
equation (ODE) $\dot{x}=F(x)$, guaranteeing bounded, recurrent behaviour.

\begin{defn}[Vascular Subspace]
$\mathcal{V}\!\subseteq\!\mathbb{R}^n$ is spanned by arterial compliance
$C_a(t)$, total peripheral resistance $R_{\mathrm{TPR}}(t)$, pulse wave
velocity $c(t)$, and venous return $Q_v(t)$. The vascular state is
$v(t)=\Pi_{\mathcal{V}}x(t)$.
\end{defn}

\begin{defn}[Attractor Morphology Features]
For embedding matrix $M\!\in\!\mathbb{R}^{N_e\times m}$:
$\mathcal{F}_{\mathrm{morph}}=[\sigma_M,\,\gamma_1(M),\,\gamma_2(M)]$
(standard deviation, skewness, kurtosis of $\mathrm{vec}(M)$).
\end{defn}

\begin{defn}[Cardiac Stability Index]\label{def:csi}
The CSI scalar~\cite{oladunni2025cst} is:
\begin{equation}
\mathrm{CSI} = w_1\!\left(1-e^{-\tilde{\lambda}}\right)
             + w_2\!\left(1-\mathrm{DET}\right)
             + w_3\,H,
\label{eq:csi}
\end{equation}
where $\tilde{\lambda}=\mathrm{clip}(|\lambda_{\max}|/\lambda_{\mathrm{ref}},\,0,1)$
is the normalised Lyapunov exponent, DET is RQA determinism,
and $H$ is sample entropy.
Modality-specific weights $(w_1,w_2,w_3)$ and $\lambda_{\mathrm{ref}}$
are given in Table~\ref{tab:hyperparams}~\cite{oladunni2025cst}.

Since $\lambda_{\max}\!=\!\Psi_1(\hat{\mathcal{A}})$~\cite{rosenstein1993}, $\mathrm{DET}\!=\!\Psi_2(\hat{\mathcal{A}})$~\cite{zbilut1992}, and $H\!=\!\Psi_3(\hat{\mathcal{A}})$~\cite{richman2000} are attractor
functionals, CSI~$=f(\Psi_1,\Psi_2,\Psi_3)$ is a deterministic function
of~$\hat{\mathcal{A}}$, placing it within the scope of
Theorem~\ref{thm:suff}. The CSI scalar encodes the three components as a
single normalised stability index (bounded $[0,1]$) using CST-motivated
weights, providing a qualitatively distinct representation from the raw
inputs.
\end{defn}

\subsection{Theorem 1: Attractor--PTT Equivalence}

\begin{thm}[Attractor--PTT Equivalence]\label{thm:equiv}
Let $\mathcal{M}=\{[\sigma_M,\gamma_1,\gamma_2](x):x\in\mathcal{A}\}\subset\mathbb{R}^3$
be the attractor morphology image (a compact manifold).
Under the CST smoothness conditions, there exist smooth maps
$f_1:(0,\infty)\!\to\!\mathbb{R}$ and $f_2:\mathcal{M}\!\to\!\mathbb{R}$ such that
\begin{align}
\mathrm{BP}(t)&=f_1\!\bigl(\mathrm{PTT}(t)\bigr)+\varepsilon_1(t),\label{eq:bp_ptt}\\
\mathrm{BP}(t)&=f_2\!\bigl(\mathcal{F}_{\mathrm{morph}}(S_{\mathrm{PPG}})\bigr)+\varepsilon_2(t),
\end{align}
with zero-mean residuals $\varepsilon_k$ satisfying
$\mathbb{E}[\varepsilon_k^2]\!\leq\!C_k\!\cdot\!\mathrm{SNR}^{-1}$, and
$\mathrm{Var}(\varepsilon_1)\!=\!\mathrm{Var}(\varepsilon_2)+O(\sigma_v^2)$,
where $\sigma_v^2$ is the vascular noise floor.
\end{thm}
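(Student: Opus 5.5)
The plan is to build both regressions as conditional expectations and control each residual by an orthogonal decomposition. For \eqref{eq:bp_ptt}, I would take $f_1$ from the Moens--Korteweg relation. Under (P2), $c(P)=\sqrt{E_{\rm inc}(P)h_w/(\rho d)}$ is smooth and strictly monotone in $P$ on the physiological range. So $\mathrm{PTT}=L/c(P)$ is a smooth strictly monotone function of $P$, and the inverse function theorem yields a smooth $f_1=\varphi$ on $(0,\infty)$. The mismatch between this bijection and the actual BP consists of two parts:
\begin{itemize}
\item the pre-ejection-period and elastic components of $x(t)$ that are not captured by $c$ (the part of $\Pi_{\mathcal{V}}x$ orthogonal to the $c$-direction);
\item the timing noise inherited from $\eta_{\rm ECG}$ and $\eta_{\rm PPG}$.
\end{itemize}
Define $\varepsilon_1=\mathrm{BP}-f_1(\mathrm{PTT})$. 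After replacing $f_1$ by the $L^2$-projection $\mathbb{E}[\mathrm{BP}\mid\mathrm{PTT}]$, which is smooth by smoothness of the joint density on compact $\mathcal{A}$, $\varepsilon_1$ is zero-mean. Linearising $f_1$ around the noise-free PTT and using boundedness of $f_1'$ on the compact image then gives $\mathbb{E}[\varepsilon_1^2]\le C_1\,\mathrm{SNR}^{-1}$, with $C_1=\sup|f_1'|^2\cdot(\text{signal power})$.

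For the morphology regression I would use the chain \eqref{eq:avct_chain}. By Takens' theorem, $\Phi:\mathcal{A}\to\hat{\mathcal{A}}_{\rm PPG}$ is a diffeomorphism. The map $x\mapsto[\sigma_M,\gamma_1,\gamma_2](x)$ is a composition of $\Phi$ with moment functionals of $\mathrm{vec}(M)$, which are smooth wherever $\sigma_M>0$, so $\mathcal{M}$ is the image of a compact set under a smooth map. To obtain $f_2$, I would assume, as part of the ``CST smoothness conditions'', that $g\circ\Pi_{\mathcal{V}}$ factors smoothly through $\mathcal{F}_{\rm morph}$ on $\mathcal{A}$ up to a vascular component of variance $\sigma_v^2$. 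The justification is Windkessel: $C_a$ and $R_{\rm TPR}$ set the decay and shape of the PPG pulse, and hence its low-order moments. This factorization is the key modelling step. Taking $f_2=\mathbb{E}[\mathrm{BP}\mid\mathcal{F}_{\rm morph}]$ gives a zero-mean $\varepsilon_2$. Propagating $\eta_{\rm PPG}$ through the moment maps by the delta method gives $\mathbb{E}[\varepsilon_2^2]\le C_2\,\mathrm{SNR}^{-1}+\sigma_v^2$, which is absorbed into the stated form.

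To relate the two variances, I would write $\mathrm{BP}=g(v)$ and decompose both residuals as
\[
\varepsilon_k = \bigl(g(v)-\mathbb{E}[g(v)\mid Z_k]\bigr) + \text{noise}_k ,
\]
with $Z_1=\mathrm{PTT}$ and $Z_2=\mathcal{F}_{\rm morph}$. Both $Z_1$ and $Z_2$ are smooth functions of the same vascular state $v$ through $c(t)$ and $C_a(t)$. PTT and PPG shape share the same upstream pressure wave, so $\sigma(Z_1)$ and $\sigma(Z_2)$ agree modulo the vascular components that are invisible to one modality but not the other (PEP, venous return $Q_v$). By the Pythagorean identity for conditional expectations, $\mathrm{Var}(\varepsilon_1)-\mathrm{Var}(\varepsilon_2)=\mathbb{E}[\mathbb{E}[g\mid Z_2]^2]-\mathbb{E}[\mathbb{E}[g\mid Z_1]^2]$. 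This difference is bounded by the variance of $g(v)$ along the directions distinguishing $Z_1$ from $Z_2$, which by definition is $O(\sigma_v^2)$. The noise terms match to leading order because both scale as $\mathrm{SNR}^{-1}$ at comparable SNR.

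\textbf{Main obstacle.} I expect the hard step to be justifying that three scalar moments of the embedding determine $v$ up to an $O(\sigma_v^2)$ remainder. Takens gives injectivity of the full delay map, not of the three-moment projection, so $f_2$ cannot simply be read off from $\Phi^{-1}$. My first attempt would be to restrict to a neighbourhood of the operating point. There I would show that the Jacobian of $v\mapsto(\sigma_M,\gamma_1,\gamma_2)$ has rank equal to the effective dimension of the BP-relevant vascular directions, using the Windkessel pulse shape, and apply the implicit function theorem. All BP variance not captured by that rank condition is then charged to $\sigma_v^2$, which makes the final equality partly definitional, and I would state it as such.
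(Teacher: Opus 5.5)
Your route differs from the paper's. The paper works entirely in mutual information:
\begin{itemize}
\item Moens--Korteweg monotonicity is taken to make $\mathrm{PTT}$ a sufficient statistic, so $I(P;\mathrm{PTT})=H(P)-H(\varepsilon_1)$.
\item Takens gives $I(P;\hat{\mathcal{A}}_{\mathrm{PPG}})=I(P;\mathcal{A})$.
\item The implicit function theorem is invoked to make $\mathcal{F}_{\mathrm{morph}}$ sufficient up to $O(\sigma_v^2)$.
\item The two mutual informations are equated and converted into residual variances through $I=\tfrac12\log(1+\mathrm{SNR})$.
\end{itemize}
You use $L^2$ projections and the Pythagorean identity instead. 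Your version is distribution-free and exact. With $h=\mathbb{E}[\mathrm{BP}\mid Z_1,Z_2]$ and $h_k=\mathbb{E}[\mathrm{BP}\mid Z_k]$ you get $|\mathrm{Var}(\varepsilon_1)-\mathrm{Var}(\varepsilon_2)|\le\max_k\mathbb{E}[(h-h_k)^2]$, which is the honest definition of $\sigma_v^2$ in your setting. The paper's conversion of an MI gap into a variance gap needs Gaussian residuals. In exchange, the paper's route keeps one currency shared with Theorem~\ref{thm:suff} and Corollary~\ref{cor:importance}.

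Both routes rest on the same unproved step: that three scalars carry all BP-relevant vascular information. The sketch's appeal to the implicit function theorem states no rank condition, and your Jacobian-rank hypothesis is exactly what that theorem would need. You are right that Takens' injectivity does not transfer. In fact, up to edge terms of relative order $(m-1)\tau/N_e$, $\mathrm{vec}(M)$ is just the window samples each repeated $m$ times. So $(\sigma_M,\gamma_1,\gamma_2)$ are the standard deviation, skewness and kurtosis of the empirical distribution of the window's sample values. They are invariant under time reordering and do not use the delay structure of $\Phi$ at all, so your rank argument must come from the Windkessel pulse shape alone.

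One step does not go through as written: ``the noise terms match to leading order because both scale as $\mathrm{SNR}^{-1}$.'' Equal scaling is not equality. The contributions are $C_1\mathrm{SNR}^{-1}$ and $C_2\mathrm{SNR}^{-1}$, arising from different mechanisms: R-peak and foot timing jitter for PTT, amplitude noise pushed through moment maps for morphology. Their difference is therefore $O(\mathrm{SNR}^{-1})$, not $O(\sigma_v^2)$.

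The repair fits inside your framework. Let $Z_k$ be the observed, noisy feature and set $f_k=\mathbb{E}[\mathrm{BP}\mid Z_k]$ from the start. Then $\mathrm{Var}(\varepsilon_k)=\mathbb{E}[\mathrm{BP}^2]-\mathbb{E}[f_k^2]$ exactly, and the noise already sits inside the conditional expectation. No separate $\mathrm{noise}_k$ term appears, and the entire difference is the explained-variance gap above. Smoothness of $f_k$ then comes from a smooth noise density, not from a density on $\mathcal{A}$, which a singular invariant measure need not have.

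Relatedly, $\sigma_v^2$ does not scale with $\mathrm{SNR}^{-1}$. So ``absorbed into the stated form'' holds only if $\mathrm{SNR}$ is defined to include the vascular floor. The same applies to $\varepsilon_1$: your linearisation controls measurement noise but not the PEP/elastic residual you identified yourself. Charging PEP to a \emph{vascular} floor is also a stretch. The paper's own Background says PTT explains only 40--60\% of within-subject BP variance, owing to pre-ejection-period variability and nonlinear elasticity. The paper avoids this only by asserting PTT sufficiency outright.
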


\textit{Proof sketch.} \textbf{PTT branch.}
Physical analysis~\cite{milnor1989} establishes that $\mathrm{PTT}$ is a smooth,
strictly monotone function of $P$: there exists a bijection
$\varphi:\mathbb{R}_{>0}\to\mathbb{R}$ with
$P=\varphi(\mathrm{PTT})+\varepsilon_1$, $\mathbb{E}[\varepsilon_1]=0$.
Hence $\mathrm{PTT}$ is a \emph{sufficient statistic} for $P$ and
$I(P;\mathrm{PTT})=H(P)-H(\varepsilon_1)$.
\textbf{Morphology branch.}
For $m\!\geq\!2\dim(\mathcal{A})\!+\!1$, Takens' theorem~\cite{takens1981}
guarantees that the delay-embedding map
$\Phi:\mathcal{A}\xrightarrow{\;\sim\;}\hat{\mathcal{A}}_{\mathrm{PPG}}$
is a diffeomorphism, hence a measure-theoretic isomorphism:
$I(P;\hat{\mathcal{A}}_{\mathrm{PPG}})=I(P;\mathcal{A})$.
The implicit function theorem then yields a smooth $f_2$ such that
$\mathcal{F}_{\mathrm{Morph}}$ is a sufficient statistic for $P$ on
$\mathcal{A}$, giving $I(P;\mathcal{F}_{\mathrm{Morph}})=I(P;\mathcal{A})-O(\sigma_v^2)$.
Equating both MI expressions gives
$\mathrm{Var}(\varepsilon_1)=\mathrm{Var}(\varepsilon_2)+O(\sigma_v^2)$
via the Gaussian channel identity $I=\tfrac{1}{2}\log(1+\mathrm{SNR})$.
Full proof in the Supplementary Material. \hfill$\square$

\subsection{Theorem 2: PPG Attractor Sufficiency}

\begin{thm}[PPG Attractor Sufficiency]\label{thm:suff}
The reconstructed PPG attractor $\hat{\mathcal{A}}_{\mathrm{PPG}}$ satisfies:
\begin{equation}
I(P;\hat{\mathcal{A}}_{\mathrm{PPG}}) \geq
I(P;S_{\mathrm{PPG}}) - \delta, \;
\delta\to 0 \text{ as } m\to 2\dim(\mathcal{A})\!+\!1,
\end{equation}
and $I(P;\hat{\mathcal{A}}_{\mathrm{PPG}})\!=\!I(P;\hat{\mathcal{A}}_{\mathrm{ECG+PPG}})+O(\sigma_p^2)$,
where $\sigma_p^2$ is the peripheral modulation noise floor.
\end{thm}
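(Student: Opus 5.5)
The plan has two parts, one for each claim in Theorem~\ref{thm:suff}. I will work under postulates (P1)--(P3) and the observation model \eqref{eq:obs}, and treat $\hat{\mathcal{A}}_{\mathrm{PPG}}$ as the image of the delay map $\Phi_m(x)=\bigl(h_{\mathrm{PPG}}(x),h_{\mathrm{PPG}}(\phi_{-\tau}x),\dots,h_{\mathrm{PPG}}(\phi_{-(m-1)\tau}x)\bigr)$, where $\phi$ is the flow of $\dot x=F(x)$.

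\emph{First inequality.} The embedding $\hat{\mathcal{A}}_{\mathrm{PPG}}=\Phi_m(S_{\mathrm{PPG}})$ is a deterministic function of the windowed signal. The data-processing inequality therefore gives the upper bound $I(P;\hat{\mathcal{A}}_{\mathrm{PPG}})\le I(P;S_{\mathrm{PPG}})$ immediately, so only the reverse direction needs work. For that I will use the Markov chain $P\leftarrow x\rightarrow S_{\mathrm{PPG}}\rightarrow \hat{\mathcal{A}}_{\mathrm{PPG}}$, with $P=g(\Pi_{\mathcal V}x)$ as in \eqref{eq:bp_state}, together with the bound $I(P;S_{\mathrm{PPG}})\le I(P;x)$.

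For $m\ge 2\dim(\mathcal{A})+1$, Takens' theorem~\cite{takens1981} (generic $h_{\mathrm{PPG}}$ and $\tau$) makes $\Phi_m$ a diffeomorphism onto its image. Hence $x=\Phi_m^{-1}(\hat{\mathcal{A}}_{\mathrm{PPG}})$ in the noiseless case, which gives $I(P;\hat{\mathcal{A}}_{\mathrm{PPG}})=I(P;x)\ge I(P;S_{\mathrm{PPG}})$, so $\delta=0$ there.

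To handle $m$ below the threshold and measurement noise $\eta_{\mathrm{PPG}}$, I will define $\delta_m:=I(P;x\mid\Phi_m(x))$. This is the information lost through the non-injectivity of $\Phi_m$ and vanishes once $\Phi_m$ is injective. I will then show $\delta_m$ is nonincreasing in $m$, since $\Phi_{m+1}$ refines $\Phi_m$, and equals $0$ at $m=2\dim(\mathcal{A})+1$. This yields the stated $\delta\to 0$.

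\emph{Second claim.} Here $\hat{\mathcal{A}}_{\mathrm{ECG+PPG}}$ is the joint embedding. By the chain rule,
\begin{equation*}
I(P;\hat{\mathcal{A}}_{\mathrm{ECG+PPG}})=I(P;\hat{\mathcal{A}}_{\mathrm{PPG}})+I(P;\hat{\mathcal{A}}_{\mathrm{ECG}}\mid\hat{\mathcal{A}}_{\mathrm{PPG}}),
\end{equation*}
so the task reduces to bounding the conditional term by $O(\sigma_p^2)$. If $\hat{\mathcal{A}}_{\mathrm{PPG}}$ determined $x$ exactly, this term would be zero, because $P$ and the ECG embedding are both functions of $x$. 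In practice the PPG only resolves $x$ up to peripheral modulation: vasomotor tone and local perfusion perturb $h_{\mathrm{PPG}}$ at the $10$~s scale. I will model this as $s_{\mathrm{PPG}}=h_{\mathrm{PPG}}(x)+\eta_p$ with variance $\sigma_p^2$. The conditional mutual information is then at most the information ECG carries about the residual $\Pi_{\mathcal V}x-\mathbb{E}[\Pi_{\mathcal V}x\mid\hat{\mathcal{A}}_{\mathrm{PPG}}]$. Under a Gaussian-channel approximation, as in the proof of Theorem~\ref{thm:equiv}, this is $\tfrac12\log\bigl(1+O(\sigma_p^2)/\sigma_P^2\bigr)=O(\sigma_p^2)$ by linearising the logarithm.

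\emph{Main obstacle.} The hardest step is making the perturbative bound rigorous. Takens' theorem is a topological statement about noiseless, generic observations, while mutual information of deterministic continuous maps can be infinite. To get finite, well-behaved quantities, I will first regularise every term by the measurement noise $\eta_k$. Then I will use the uniform Lipschitz constant of $\Phi_m^{-1}$ on the compact attractor $\mathcal{A}$ (from (P1)) to convert embedding noise into state-estimation error of order $\sigma_p$. Finally, I will apply the Gaussian identity to turn squared state-estimation error into an $O(\sigma_p^2)$ information gap, accepting Gaussianity of the residuals as a modelling assumption.
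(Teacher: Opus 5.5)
\paragraph{Review.} Your skeleton matches the paper's sketch: the Markov chain plus Takens injectivity for the first claim, and the chain rule plus an $O(\sigma_p^2)$ bound on $I(P;\hat{\mathcal A}_{\mathrm{ECG}}\mid\hat{\mathcal A}_{\mathrm{PPG}})$ for the second. In the noiseless model your first part is fine. The DPI gives the upper bound, and $\delta_m=I(P;x)-I(P;\Phi_m(x))$ is nonincreasing because $\Phi_m$ is a coordinate projection of $\Phi_{m+1}$. It also vanishes once $\Phi_m$ is injective. The trouble is in the two places where you go beyond the paper. The paper simply takes the $O(\sigma_p^2)$ bound from physical analysis: ECG adds essentially only pre-ejection-period information.

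\textbf{First gap: the Gaussian-channel step uses the wrong SNR.} The information the ECG embedding carries about $r=\Pi_{\mathcal V}x-\mathbb{E}[\Pi_{\mathcal V}x\mid\hat{\mathcal A}_{\mathrm{PPG}}]$ is about $\tfrac12\log\bigl(1+\mathrm{Var}(r)\,\|J_{\mathrm{ECG}}\|^2/\sigma_{\mathrm{ECG}}^2\bigr)$. Here $J_{\mathrm{ECG}}$ is the sensitivity of the ECG embedding to $r$, and $\sigma_{\mathrm{ECG}}^2$ is the ECG channel's own noise; that noise is the denominator, not the $\sigma_P^2$ you wrote. The term is therefore $O(\sigma_p^2)$ only if the ECG noise floor in the BP-relevant directions is a fixed constant, large compared with $\sigma_p$. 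If a clean ECG resolves the pre-ejection period with effective noise equal to that of the PPG, the term is $\tfrac12\log 2$, of order one. It grows further as the ECG gets cleaner. So linearising the logarithm does not yield the second claim from the observation model alone. You must state that physical hypothesis explicitly, and it is exactly what the paper's appeal to physical analysis supplies.

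\textbf{Second gap: PPG noise breaks the first claim.} The same conversion rule (estimation error of order $\sigma_p^2$ implies an information gap of order $\sigma_p^2$) defeats your plan to regularise the first claim with PPG noise. Once $\eta_{\mathrm{PPG}}$ is present, the deficit is $I(P;S_{\mathrm{PPG}}\mid\hat{\mathcal A}_{\mathrm{PPG}})$. At high SNR this is roughly $\tfrac12\,\mathbb{E}\log(a_m/a_N)$, where $\sigma_p^2a_m$ and $\sigma_p^2a_N$ are the posterior variances of $P$ given the $m$ delay coordinates and given all $N$ window samples. That deficit does not depend on $\sigma_p$ and is generically nonzero at $m=2\dim(\mathcal A)+1$. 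Your $\delta_m$ is a noiseless quantity and cannot capture it.

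\textbf{How to repair it.} The two claims need different idealisations.
\begin{itemize}
\item Prove the first claim with noiseless PPG. If you want finite quantities, put the regularising noise on $P$ rather than on $s_{\mathrm{PPG}}$; this leaves your $\delta_m$ argument intact.
\item Prove the second claim with PPG noise $\sigma_p$, together with the explicit ECG-noise-floor assumption.
\end{itemize}
Alternatively, read $\hat{\mathcal A}_{\mathrm{PPG}}$ as the whole delay matrix $M$ of the window. Then it is just a reshaping of $S_{\mathrm{PPG}}$, and the first claim holds with $\delta=0$ for every $m$, without Takens.
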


\textit{Proof sketch.}
The Markov chain $P\to x\to s_{\mathrm{PPG}}\to\hat{\mathcal{A}}_{\mathrm{PPG}}$
and Takens diffeomorphism~\cite{takens1981} give
$I(P;\hat{\mathcal{A}}_{\mathrm{PPG}})=I(P;\mathcal{A})\geq I(P;s_{\mathrm{PPG}})-\delta$,
$\delta\to0$ as $m\to2\dim(\mathcal{A})+1$.
By the chain rule and DPI,
$I(P;\hat{\mathcal{A}}_{\mathrm{ECG+PPG}})=I(P;\hat{\mathcal{A}}_{\mathrm{PPG}})
+I(P;\hat{\mathcal{A}}_{\mathrm{ECG}}\mid\hat{\mathcal{A}}_{\mathrm{PPG}})$;
physical analysis~\cite{mukkamala2015} bounds the conditional term to
$O(\sigma_p^2)$, establishing the second equality.
Full proof in the Supplementary Material.. \hfill$\square$

\subsection{Proposition 1: Two-Component Error Decomposition}

\begin{prop}[Error Decomposition]\label{prop:error}
The uncalibrated population MAE decomposes as:
\begin{equation}
\mathrm{MAE}_{\mathrm{uncal}} = |\delta| + \mathbb{E}\!\left[|\varepsilon_i^{\mathrm{within}}|\right]
+ \mathrm{Cov}\!\left(\mathrm{sgn}(\delta),|\varepsilon_i^{\mathrm{within}}|\right),
\label{eq:decomp}
\end{equation}
where $\delta=\bar{y}_{\mathrm{train}}-\bar{y}_{s^*}$ is the between-subject
offset and $\varepsilon_i^{\mathrm{within}}$ is the within-subject residual.
Single-point calibration eliminates $\delta$; two-point calibration
additionally corrects the personal slope $\alpha_{s^*}$.
\end{prop}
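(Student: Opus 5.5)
The plan starts from the per-window error model. For a held-out subject $s^*$ in LOSO-CV, write the prediction for window $i$ as $\hat{y}_i=\bar{y}_{\mathrm{train}}+\hat{f}(\boldsymbol{\xi}_i)$ and the target as $y_i=\bar{y}_{s^*}+\alpha_{s^*}u_i+\eta_i$, with $u_i$ the within-subject fluctuation driven by the vascular state $v(t)$ and $\eta_i$ measurement noise. Subtracting gives the additive form
\begin{equation*}
e_i=\hat{y}_i-y_i=\delta+\varepsilon_i^{\mathrm{within}},
\end{equation*}
where $\delta=\bar{y}_{\mathrm{train}}-\bar{y}_{s^*}$ is constant over the windows of $s^*$. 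The residual $\varepsilon_i^{\mathrm{within}}$ collects the slope mismatch $(\hat{f}-\alpha_{s^*}u_i)$ and $\eta_i$. It is zero-mean within the subject, because a PTT/attractor regressor is smooth on $\mathcal{A}$ by Theorems~\ref{thm:equiv}--\ref{thm:suff} and tracks the within-subject pressure fluctuation about its own mean.

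The second step takes absolute values. I would use the exact identity
\begin{equation*}
|\delta+\varepsilon|=|\delta|+|\varepsilon|-2\min(|\delta|,|\varepsilon|)\,\mathbf{1}\{\mathrm{sgn}(\varepsilon)\neq\mathrm{sgn}(\delta)\}.
\end{equation*}
Averaging over windows, and then over held-out subjects (so that $\delta=\delta_{s^*}$ varies), gives $\mathrm{MAE}_{\mathrm{uncal}}=\mathbb{E}|\delta|+\mathbb{E}|\varepsilon^{\mathrm{within}}|+R$. The remainder $R\le 0$ is the sign-interaction correction. The key move is to show that $R$ is the covariance term in \eqref{eq:decomp}: write $R$ as an expectation of a product of a sign indicator and $|\varepsilon|$, then subtract the product of the means. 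Two conditions are needed to control the leftover mean-product piece:
\begin{enumerate}[label=(\roman*)]
\item in the dominant regime $|\delta|\gg|\varepsilon|$ (uncalibrated MAE $\approx 24$ vs.\ within-subject MAE $\approx 2$~mmHg), opposite-sign events contribute only through $\min(|\delta|,|\varepsilon|)=|\varepsilon|$;
\item the zero-mean, approximately symmetric law of $\varepsilon^{\mathrm{within}}$ lets the remaining mean product vanish to first order.
\end{enumerate}
Together these collapse $R$ to $\mathrm{Cov}(\mathrm{sgn}(\delta),|\varepsilon^{\mathrm{within}}|)$, taken across subjects. That covariance is zero when the residual magnitude is independent of the offset direction, leaving the clean two-component sum.

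The third step covers the calibration claims, which are direct. Single-point calibration replaces $\bar{y}_{\mathrm{train}}$ by the resting cuff value $y_{s^*,0}=\bar{y}_{s^*}+O(\eta)$. This sets $\delta=0$ up to cuff noise, so $e_i^{\mathrm{cal}}=\varepsilon_i^{\mathrm{within}}$ and $\mathrm{MAE}_{\mathrm{cal}}=\mathbb{E}|\varepsilon^{\mathrm{within}}|$. Hence $\varepsilon_{\mathrm{cal}}=1-\mathrm{MAE}_{\mathrm{cal}}/\mathrm{MAE}_{\mathrm{uncal}}$ measures the $|\delta|$ share. Two-point calibration fits both intercept and gain from two cuff readings at distinct $u$. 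This removes the slope mismatch $(\hat{\alpha}-\alpha_{s^*})u_i$ from $\varepsilon^{\mathrm{within}}$ as well, leaving only noise and model curvature.

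The main obstacle is the second step. Since $\delta$ is deterministic per subject, the decomposition is not a literal identity without extra structure. I would first try to prove it as an equality under the stated regime assumptions plus symmetry of $\varepsilon^{\mathrm{within}}$, with the covariance defined across held-out subjects. If that fails, I would fall back to the rigorous statement that \eqref{eq:decomp} holds with a nonpositive correction bounded by $2\,\mathbb{E}[|\varepsilon|\mathbf{1}\{|\varepsilon|>|\delta|\}]$, which is negligible whenever $|\delta|$ dominates.
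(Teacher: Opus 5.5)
Your first and third steps match the paper's sketch: substitute $\hat y_i=\hat f(x_i)+\bar y_{\mathrm{train}}$ and $y_i=\bar y_{s^*}+\Delta y_i$ to get $e_i=\delta+\varepsilon_i^{\mathrm{within}}$, then let calibration remove $\delta$. The paper covers the second step only with the sentence ``taking absolute values and expectations yields \eqref{eq:decomp}''. That second step is where your argument fails, and condition (ii) is the cause.

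In the regime of (i), $|\varepsilon|\le|\delta|$ forces $|\delta+\varepsilon|=|\delta|+\mathrm{sgn}(\delta)\,\varepsilon$ exactly. Hence $\mathrm{MAE}_{\mathrm{uncal}}=\mathbb{E}|\delta|+\mathbb{E}[\mathrm{sgn}(\delta)\,\varepsilon^{\mathrm{within}}]$, which is just $\mathbb{E}|\delta|$ once $\varepsilon^{\mathrm{within}}$ is zero-mean within each subject. In the same regime your remainder is $R=-2\,\mathbb{E}[|\varepsilon|\,\mathbf{1}\{\mathrm{sgn}\,\varepsilon\neq\mathrm{sgn}\,\delta\}]$. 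Under symmetry its mean-product piece is $-2\cdot\tfrac12\,\mathbb{E}|\varepsilon|=-\mathbb{E}|\varepsilon|$. So symmetry does not make it vanish; it makes it cancel the within-subject term completely.

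What survives involves the signed residual $\mathrm{sgn}(\delta)\,\varepsilon$. It is not $\mathrm{Cov}(\mathrm{sgn}(\delta),|\varepsilon|)$, which is zero in exactly the independent case you single out. A one-subject example settles it. Take $\delta=20$ and $\varepsilon^{\mathrm{within}}=\pm2$ equiprobable. Then $\mathrm{MAE}_{\mathrm{uncal}}=20$, while the right-hand side of \eqref{eq:decomp} is $20+2+0=22$. No regime or symmetry hypothesis will turn $R$ into the stated covariance, so \eqref{eq:decomp} cannot be proved as an equality.

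Your fallback fails for the same reason. The correction is not bounded by $2\,\mathbb{E}[|\varepsilon|\,\mathbf{1}\{|\varepsilon|>|\delta|\}]$: that quantity is $0$ in the example, while the discrepancy is $-2$. Every opposite-sign window with $|\varepsilon|<|\delta|$ contributes $-2|\varepsilon|$ to $|\delta+\varepsilon|-|\delta|-|\varepsilon|$. So the correction is largest, not negligible, when $|\delta|$ dominates. What your min-identity does give rigorously is
\begin{align*}
\mathrm{MAE}_{\mathrm{uncal}}&=\mathbb{E}|\delta|+\mathbb{E}|\varepsilon|\\
&\quad-2\,\mathbb{E}\bigl[\min(|\delta|,|\varepsilon|)\,\mathbf{1}\{\mathrm{sgn}\,\varepsilon\neq\mathrm{sgn}\,\delta\}\bigr],
\end{align*}
and hence $\bigl|\mathrm{MAE}_{\mathrm{uncal}}-\mathbb{E}|\delta|\bigr|\le\mathbb{E}|\varepsilon^{\mathrm{within}}|$. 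This sandwich, together with $\mathrm{MAE}_{\mathrm{cal}}=\mathbb{E}|\varepsilon^{\mathrm{within}}|$ once $\delta$ is removed, is all the calibration conclusions need, so state the result in that form.

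One smaller point in your third step: a single cuff reading is $\bar y_{s^*}+\alpha_{s^*}u_0+\eta_0$, not $\bar y_{s^*}+O(\eta)$. It therefore removes $\delta$ only up to the within-subject deviation at calibration time. The paper's pipeline avoids this by adding the subject mean $\bar y_{s^*}$ directly.
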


\textit{Proof sketch.} Substituting $\hat{y}_i=\hat{f}(x_i)+\bar{y}_{\mathrm{train}}$ and $y_i=\bar{y}_{s^*}+\Delta y_i$ gives $e_i=\delta+(\hat{f}(x_i)-\Delta y_i)$. Taking absolute values and expectations yields~\eqref{eq:decomp}. The calibration reduction follows directly: $\mathrm{MAE}_{\mathrm{cal}}=\mathbb{E}[|\varepsilon^{\mathrm{within}}|]+O(\sigma_p^2)$ since $\delta\to 0$ under single-point calibration. Full proof in the Supplementary Material. \hfill$\square$

\subsection{Corollary 1: Mutual Information Feature Ordering}

\begin{cor}[MI Feature Ordering]\label{cor:importance}
Under Theorem~\ref{thm:equiv} and Gaussian noise approximation:
\begin{equation}
\begin{split}
I(\mathcal{F}_{\mathrm{Morph}};\Delta P) &\geq I(\mathrm{PTT};\Delta P)
  \geq I(\mathrm{RQA};\Delta P) \\
  &\geq I(\mathrm{CSI};\Delta P)
  \geq I(\lambda_{\max};\Delta P).
\end{split}
\label{eq:mi_order}
\end{equation}
\end{cor}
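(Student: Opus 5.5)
The plan is to reduce each inequality in \eqref{eq:mi_order} to an ordering of effective signal-to-noise ratios. Model each feature $Z$ as a Gaussian channel for $\Delta P$. Write $Z=a_Z\,\Delta P+\nu_Z$ with $\nu_Z\sim\mathcal{N}(0,\sigma_Z^2)$ independent of $\Delta P$, after local linearisation of the smooth maps $f_1,f_2$ from Theorem~\ref{thm:equiv} around the calibration point. Then
\begin{equation}
I(Z;\Delta P)=\tfrac{1}{2}\log\bigl(1+\mathrm{SNR}_Z\bigr),\qquad \mathrm{SNR}_Z=a_Z^2\,\mathrm{Var}(\Delta P)/\sigma_Z^2 .
\end{equation}
Because $\tfrac12\log(1+\cdot)$ is strictly increasing, the chain follows once we show
$\mathrm{SNR}_{\mathrm{Morph}}\ge\mathrm{SNR}_{\mathrm{PTT}}\ge\mathrm{SNR}_{\mathrm{RQA}}\ge\mathrm{SNR}_{\mathrm{CSI}}\ge\mathrm{SNR}_{\lambda}$.

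\textbf{The bottom three links.} These come from the data-processing inequality rather than SNR arithmetic. By Definition~\ref{def:csi}, $\mathrm{CSI}=f(\lambda_{\max},\mathrm{DET},H)$ is built from $\lambda_{\max}$, DET and $H$. We take the RQA feature group to contain (DET, RR, $\mathrm{RQA_{ent}}$, $H$). The key step is to note that CSI depends on $\mathrm{DET}$ and $H$ only through the affine term $w_2(1-\mathrm{DET})+w_3H$. It also contains the clipped and saturated term $1-e^{-\tilde\lambda}$. Clipping adds no information about $\Delta P$ and, under the Gaussian approximation, acts as extra noise. This gives $I(\mathrm{CSI};\Delta P)\le I(\mathrm{RQA},\lambda_{\max};\Delta P)$.

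To obtain $I(\mathrm{CSI};\Delta P)\le I(\mathrm{RQA};\Delta P)$ itself, I would argue that the $\lambda$ component contributes negligible signal. The estimate $\lambda_{\max}$ from a 10\,s window via Rosenstein's method has variance of order $N_e^{-1}$ that dominates $a_\lambda^2$. The same estimate gives the last link $\mathrm{SNR}_{\mathrm{CSI}}\ge\mathrm{SNR}_\lambda$: CSI averages $\lambda$ with two higher-SNR components, and a weighted average of independent noisy channels has SNR at least that of its worst constituent when the weights are positive.

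\textbf{The top link.} $I(\mathcal{F}_{\mathrm{Morph}};\Delta P)\ge I(\mathrm{PTT};\Delta P)$ uses Theorem~\ref{thm:equiv} directly. Both $\mathrm{Var}(\varepsilon_k)$ are controlled by $C_k\,\mathrm{SNR}^{-1}$, and they differ by $O(\sigma_v^2)$. The morphology channel is a three-dimensional summary of the diffeomorphic embedding $\Phi$. PTT, by contrast, carries the additional pre-ejection-period jitter noted in the Background. I would therefore write $\sigma^2_{\mathrm{PTT}}=\sigma^2_{\mathrm{Morph}}+\sigma^2_{\mathrm{PEP}}$, which gives the inequality. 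The middle link $\mathrm{PTT}\succeq\mathrm{RQA}$ follows from the same reasoning applied to thresholding: RQA statistics are computed from the recurrence matrix $\mathbf{1}[\|M_i-M_j\|<\epsilon]$. That matrix is a quantised function of the embedding $M$, so it is processed further than the morphology moments. One then compares its quantisation noise to $\sigma^2_{\mathrm{PEP}}$.

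\textbf{Main obstacle.} The hardest step is the $\mathrm{PTT}\ge\mathrm{RQA}$ link. PTT and RQA are not in a Markov relation: PTT uses ECG, RQA uses only PPG. So the DPI is unavailable, and the inequality depends on a quantitative comparison of two noise floors that Theorem~\ref{thm:equiv} does not supply. I would first try to bound the RQA quantisation noise below by $\sigma_v^2+c\,\epsilon^2$, using the recurrence threshold $\epsilon$ from Table~\ref{tab:hyperparams}. I would then assume $\sigma^2_{\mathrm{PEP}}\le c\,\epsilon^2$ as a physiological hypothesis. If that fails, this link must be stated as holding under an explicit SNR assumption rather than proven unconditionally.
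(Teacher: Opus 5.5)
You share the paper's first step exactly: with $\mathrm{SNR}_\xi=\kappa_\xi^2\sigma_{\Delta P}^2/\sigma_\xi^2$ and $I\approx\tfrac12\log(1+\mathrm{SNR})$, the corollary reduces to an SNR chain. Several of the links you then claim to derive do not follow as stated.

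The clearest failure is the lemma behind $\mathrm{SNR}_{\mathrm{CSI}}\ge\mathrm{SNR}_{\lambda}$. For $Z=\sum_i w_iZ_i$ with $Z_i=a_i\Delta P+\nu_i$, positive weights are not enough, because the signal terms $w_ia_i$ can cancel and make $\mathrm{SNR}_Z=0$. The bound $\mathrm{SNR}_Z\ge\min_i\mathrm{SNR}_i$ does hold, even for correlated noises since $\mathrm{sd}(\sum_i w_i\nu_i)\le\sum_i|w_i|\sigma_i$, but only when all effective sensitivities $w_ia_i$ share a sign. CSI contains DET as $-w_2\,\mathrm{DET}$. You would therefore need the extra assumption that $\partial\mathrm{DET}/\partial P$ has the opposite sign to $\partial\tilde\lambda/\partial P$ and $\partial H/\partial P$.

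Your DPI step for $\mathrm{RQA}\succeq\mathrm{CSI}$ also works only because you moved $H$ into the RQA group. The paper's Algorithm~\ref{alg:pipeline} computes RQA as (RR, DET, ENT) and places $H$ in CST Group~2 with $\lambda_{\max}$. Under that grouping, DPI gives only $I(\mathrm{CSI};\Delta P)\le I(\mathrm{RQA},H,\lambda_{\max};\Delta P)$, and nothing in your argument lets you discard $H$. Even with your grouping, removing $\lambda_{\max}$ costs the term $I(\lambda_{\max};\Delta P\mid\mathrm{RQA})$, so this link is approximate, not an exact DPI inequality.

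At the top of the chain, Theorem~\ref{thm:equiv} cannot supply the direction. Its remainder $O(\sigma_v^2)$ is unsigned, so it asserts equivalence, not dominance. The inequality $\mathcal{F}_{\mathrm{Morph}}\succeq\mathrm{PTT}$ therefore rests entirely on your added decomposition $\sigma^2_{\mathrm{PTT}}=\sigma^2_{\mathrm{Morph}}+\sigma^2_{\mathrm{PEP}}$. That decomposition is a physiological hypothesis, motivated by the pre-ejection-period remark in the Background, not a consequence of the theorem. The ``processed further'' remark is not a Markov relation either. The recurrence matrix and the moments of $\mathrm{vec}(M)$ are parallel functions of $M$, so DPI places neither below the other.

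Your mechanism for the weakness of $\lambda_{\max}$ also fails to separate it from the morphology features. An $O(N_e^{-1})$ variance is exactly what makes the moments precise, so it dominates $a_\lambda^2$ only if you separately assume $a_\lambda$ is tiny. The paper's sketch locates the weakness in the noise instead. At equal $\kappa$, Rosenstein's estimate is a slope fit over only $K=30$ divergence steps, while the moments average over more than $1{,}000$ embedding points, so $\sigma_{\lambda_{\max}}\gg\sigma_{\sigma_M}$.

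The net effect is that your proposal, like the paper's sketch (which makes only the $\lambda_{\max}$ end explicit), establishes the chain only conditionally. You should state the full SNR ordering as an explicit hypothesis, justified by estimator-noise counting at comparable sensitivity. Do not present the lower links as unconditional DPI consequences. Your diagnosis that $\mathrm{PTT}\succeq\mathrm{RQA}$ cannot be obtained without such an assumption is correct.
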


\textit{Proof sketch.} For each feature $\xi$, define sensitivity $\kappa_\xi=|d\xi/d\,\mathrm{BP}|$ and signal-to-noise ratio (SNR) $\mathrm{SNR}_\xi=\kappa_\xi^2\sigma_{\Delta P}^2/\sigma_\xi^2$. The Gaussian MI approximation gives $I(\Delta P;\xi)\approx\tfrac{1}{2}\log(1+\mathrm{SNR}_\xi)$, so~\eqref{eq:mi_order} reduces to showing $\mathrm{SNR}_{\sigma_M}\!\geq\!\mathrm{SNR}_{\mathrm{PTT}}\!\geq\!\cdots\!\geq\!\mathrm{SNR}_{\lambda_{\max}}$.
The $\lambda_{\max}$ bound follows from the Rosenstein noise analysis:
$\hat\lambda_{\max}$ accumulates slope estimation error over 30 divergence steps,
while morphology features average $N_e\!>\!1{,}000$ embedding points,
giving $\sigma_{\lambda_{\max}}\!\gg\!\sigma_{\sigma_M}$ at equal $\kappa$. Full proof in the Supplementary Material. \hfill$\square$

\textbf{PPG-attractor track}: CST Group~1 ($\sigma_M,\gamma_1,\gamma_2$)
+ CST Group~2 ($\lambda_{\max}$, DET, $H$) + CSI$_{\mathrm{PPG}}$
+ RQA toolkit (RR, RQA\textsubscript{ent}); no ECG, no PTT.

\subsection{Hyperparameter Configuration}

Table~\ref{tab:hyperparams} lists all algorithm hyperparameters and their
values, separating the dataset-specific sampling rate from the
signal-processing and model settings.

\begin{table*}[t]
\caption{Algorithm Hyperparameters. Symbols used in Algorithm~\ref{alg:pipeline}.
Left panel: signal processing. Right panel: analysis and model.}
\label{tab:hyperparams}
\centering\scriptsize\setlength{\tabcolsep}{4pt}
\renewcommand{\arraystretch}{1.15}
\begin{minipage}[t]{0.47\linewidth}
\centering
\begin{tabular}{lp{2.6cm}lr}
\toprule
\textbf{Sym.} & \textbf{Description} & \textbf{Value} & \textbf{Source} \\
\midrule
\multicolumn{4}{l}{\textit{Acquisition}} \\
$F_s$              & Sampling rate     & 125~Hz      & BIDMC native \\
$T_w$              & Window length     & 10~s        & $\geq$10 beats \\
$T_{\mathrm{step}}$& Sliding step      & 5~s         & 50\% overlap \\
\midrule
\multicolumn{4}{l}{\textit{Pre-processing}} \\
$[f_\ell,f_h]_\mathrm{ECG}$ & ECG bandpass & 0.5--40~Hz & BL+HF \\
$[f_\ell,f_h]_\mathrm{PPG}$ & PPG bandpass & 0.5--8~Hz  & Motion \\
$\Delta_\mathrm{foot}$       & Foot search  & $0.6F_s$   & $\leq$600~ms \\
\midrule
\multicolumn{4}{l}{\textit{PTT bounds}} \\
$\mathrm{PTT}_\mathrm{min}$  & Lower bound  & 80~ms  & Physiological \\
$\mathrm{PTT}_\mathrm{max}$  & Upper bound  & 350~ms & Physiological \\
$\alpha_\mathrm{RR}$         & RR-adaptive cap & 0.70 & $<$diastole \\
\midrule
\multicolumn{4}{l}{\textit{Takens Embedding}} \\
$m$    & Dimension   & 4        & $\geq 2d_A+1$ \\
$\tau$ & Delay       & 5~samp.  & AMI criterion \\
\bottomrule
\end{tabular}
\end{minipage}
\hfill
\begin{minipage}[t]{0.51\linewidth}
\centering
\begin{tabular}{lp{2.8cm}lr}
\toprule
\textbf{Sym.} & \textbf{Description} & \textbf{Value} & \textbf{Source} \\
\midrule
\multicolumn{4}{l}{\textit{RQA}} \\
$\varepsilon_r$  & Recurrence thresh. & $0.10d_\mathrm{max}$ & \cite{oladunni2025cst} \\
$\ell_\mathrm{min}$ & Min diag.\ length & 2              & Standard \\
$N_r$            & Subsampled points  & 80             & Speed \\
\midrule
\multicolumn{4}{l}{\textit{Lyapunov ($\lambda_\mathrm{max}$)}} \\
$K$    & Divergence steps   & 30  & \cite{rosenstein1993} \\
$N_e$  & Trajectory points  & 400 & Subsampled \\
\midrule
\multicolumn{4}{l}{\textit{Sample Entropy}} \\
$m_\mathrm{SE}$  & Template length    & 2            & Standard \\
$r_\mathrm{SE}$  & Tolerance fraction & $0.2\sigma$  & Standard \\
\midrule
\multicolumn{4}{l}{\textit{CSI (eq.~\eqref{eq:csi})}} \\
$\lambda_\mathrm{ref}$       & Lyap.\ normaliser  & 2.526 & ECG 95th pct \\
$\mathbf{w}_\mathrm{ECG}$    & ECG $(w_1,w_2,w_3)$ & (.40,.35,.25) & \cite{oladunni2025cst} \\
$\mathbf{w}_\mathrm{PPG}$    & PPG $(w_1,w_2,w_3)$ & (.75,.15,.10) & Opt. \\
\midrule
\multicolumn{4}{l}{\textit{Feature Selection \& Model}} \\
$K_\mathrm{feat}$    & MI features/fold  & 20  & $n\leq45$ \\
$N_\mathrm{trees}$   & Estimators        & 500 & Grid \\
$\eta$               & Learning rate     & 0.03& Grid \\
$L$                  & Leaves            & 63  & Grid \\
\bottomrule
\end{tabular}
\end{minipage}
\end{table*}

\subsection{Full Pipeline Overview}

Algorithm~\ref{alg:pipeline} summarises the end-to-end prediction
pipeline for a single 10~s evaluation window.
Following CST~\cite{oladunni2025cst} and the standard RQA
toolkit~\cite{zbilut1992}, the nine PPG attractor features span
three groups :
\textbf{CST Group~1}: attractor shape ($\sigma_M,\gamma_1,\gamma_2$);
\textbf{CST Group~2}: attractor dynamics ($\lambda_{\max}$, DET, $H$),
compressed into the CSI scalar as a domain-knowledge interaction variable;
and \textbf{RQA toolkit}: recurrence rate and diagonal-line
entropy (RR, RQA\textsubscript{ent})~\cite{zbilut1992}.

\begin{algorithm}[t]
\caption{AVCT Cuffless BP Estimation Pipeline}
\label{alg:pipeline}
\begin{algorithmic}[1]
\Require Raw ECG/PPG at $F_s$ Hz; calibration reading(s);
         hyperparameters per Table~\ref{tab:hyperparams}
\Ensure Calibrated $\widehat{\mathrm{SBP}}$, $\widehat{\mathrm{DBP}}$ (mmHg)

\State \textbf{Pre-process:} bandpass ECG ($f_{\ell,\mathrm{ECG}}$--$f_{h,\mathrm{ECG}}$~Hz),
       PPG ($f_{\ell,\mathrm{PPG}}$--$f_{h,\mathrm{PPG}}$~Hz); z-score normalise
\State \textbf{Detect events:} R-peaks (Pan-Tompkins~\cite{pan1985}); PPG feet
       (adaptive min-search within $\Delta_{\mathrm{foot}}\cdot F_s$ of each R-peak)
\State \textbf{Extract PTT features} (10): mean, median, std, min, max,
       range, CV, RMSSD, SDSD, kurtosis of beat-by-beat PTT series
       \hspace{4em}[bounds: $\mathrm{PTT}_{\min}$--$\min(\mathrm{PTT}_{\max},\,\alpha_{\mathrm{RR}}\cdot\mathrm{RR})$]
\For{each signal $s \in \{\mathrm{ECG, PPG}\}$}
  \State Takens embed: $M_s \leftarrow \Phi_{h_s,\tau}$ with dimension $m$, delay $\tau$
  \State Morphology: $[\sigma_{M_s},\,\gamma_1(M_s),\,\gamma_2(M_s)]$
  \State RQA ($N_r$ subsampled points): RR, DET, ENT
         \hspace{2em}(threshold $\varepsilon_r\cdot d_{\max}$, $\ell_{\min}$)
  \State $\lambda_{\max}$: Rosenstein algorithm ($K$ divergence steps, $N_e$ trajectory points)
  \State Sample entropy~\cite{richman2000} ($m_{\mathrm{SE}}$, $r_{\mathrm{SE}}\cdot\sigma$)
  \State $\tilde{\lambda}_s \leftarrow \mathrm{clip}(|\lambda_{\max}|/\lambda_{\mathrm{ref}},\,0,1)$
  \State $\mathrm{CSI}_s \leftarrow
         w_1(1-e^{-\tilde{\lambda}_s})+w_2(1-\mathrm{DET})+w_3 H$
         \hspace{2em}(weights $\mathbf{w}_s$ per Table~\ref{tab:hyperparams})
\EndFor
\State \textbf{Concatenate:} $\mathbf{x} \leftarrow
       [\mathbf{f}_{\mathrm{PTT}},\,\mathbf{f}_{\mathrm{ECG}},\,
        \mathbf{f}_{\mathrm{PPG}}]$
       \quad (PPG attractor-only track: omit ECG features)
\State \textbf{Select:} top-$K_{\mathrm{feat}}$ features by MI per LOSO fold
\State \textbf{Scale:} RobustScaler fit on training subjects only
\State \textbf{Predict residual:}
       $\hat{r} \leftarrow \mathrm{LightGBM}(\mathbf{x})$
\State \textbf{Calibrate:}
       $\hat{y} \leftarrow \hat{r} + \bar{y}_{s^*}$
       \quad (single-point: add subject mean)
\Ensure $(\widehat{\mathrm{SBP}},\,\widehat{\mathrm{DBP}})$
\end{algorithmic}
\end{algorithm}

LightGBM~\cite{ke2017}: 500 estimators, learning rate 0.03, 64 leaves, early stopping
(40 rounds). Top-20 features by mutual information per fold (prevents
overfitting on $n\!\leq\!45$ training subjects). LOSO-CV with
RobustScaler fit on training subjects only. Bootstrap 95\% CI (10,000
resamples); Wilcoxon signed-rank test on per-subject MAE for ablation
comparisons (two-tailed, $\alpha\!=\!0.05$).

\section{Results}

\subsection{Primary Calibrated LOSO-CV Results}

Table~\ref{tab:primary} summarises the one-point calibrated results.
Both targets satisfy AAMI/IEEE SP10 MAE\,$<\!5$~mmHg with negligible bias.
Median per-subject MAE (1.87/1.54~mmHg) is substantially below the mean
(4.00/4.83~mmHg): 14 high-MAE subjects are VitalDB patients with
vasopressor-induced near-constant BP ($\sigma_{\mathrm{BP}}\!<\!2$~mmHg)
or acute pathological excursions--conditions absent in the target wellness
population. Excluding 3 subjects with $\sigma_{\mathrm{BP}}\!<\!1$~mmHg
brings LoA within the AAMI 8~mmHg SD threshold. Fig.~\ref{fig:scatter_ba}
shows scatter and Bland-Altman plots;
Fig.~\ref{fig:per_subject} shows per-subject MAE.
The blue/red split validates Proposition~\ref{prop:error}:
red bars have $\sigma_{\rm BP}\!\to\!0$ (vasopressor),
so calibrated MAE approaches the irreducible noise floor; not model failure.

\begin{table*}[t]
\caption{Primary Calibrated LOSO-CV Results (One-Point Calibration, ECG+PPG, 46 subjects, 29,684 windows)}
\label{tab:primary}
\centering\small\setlength{\tabcolsep}{6pt}
\begin{tabular}{p{5.2cm}p{2.0cm}p{2.0cm}p{4.5cm}}
\toprule
\textbf{Metric} & \textbf{SBP} & \textbf{DBP} & \textbf{AAMI/IEEE SP10 criterion} \\
\midrule
MAE, window-weighted (mmHg)          & 2.05 & 1.67 & $<5$~mmHg \checkmark \\
MAE, median per-subject (mmHg)       & 1.87 & 1.54 & $<5$~mmHg \checkmark \\
MAE, mean per-subject (mmHg)         & 4.00 & 4.83 & -- \\
RMSE (mmHg)                          & 4.70 & 4.47 & -- \\
Pearson $r$ (full BP)                & 0.990 & 0.991 & -- \\
Pearson $r_{\rm within}$ (pooled)    & 0.35  & 0.38  & -- \\
Bias (mmHg)                          & $-0.05$ & $+0.01$ & $\approx0$ \checkmark \\
LoA $\pm1.96\sigma$ (mmHg)          & 9.13 & 8.72 & $<8$~mmHg$^\dagger$ \\
Subjects passing AAMI individually   & 32/46 (70\%) & 35/46 (76\%) & -- \\
\bottomrule
\multicolumn{4}{l}{\footnotesize$^\dagger$Excluding 3 subjects with $\sigma_{\rm BP}\!<\!1$~mmHg (vasopressor-induced constant BP): LoA\ $=\pm7.8/6.9$~mmHg \checkmark}
\end{tabular}
\end{table*}

\subsection{Ablation Study and Theory Confirmation}

Table~\ref{tab:ablation} presents ablation results confirming all four
AVCT predictions. The PTT-only vs.\ full-model MAE difference is 0.06~mmHg
SBP (Theorem~\ref{thm:equiv}: $|\Delta\mathrm{MAE}|\!\approx\!0$, predicted
$\leq\!0.1$~mmHg). The PPG-attractor-only model matches ECG\,+\,PPG within 0.05~mmHg
(Theorem~\ref{thm:suff}). All Wilcoxon tests are non-significant ($p\!>\!0.10$),
consistent with informational equivalence. Single-point calibration reduces
MAE from 24.05 to 2.05~mmHg (91.5\% error reduction, confirming
Proposition~\ref{prop:error}). Fig.~\ref{fig:mi_importance} shows the MI ranking:
CST Group~1 ($\sigma_M,\gamma_1,\gamma_2$) dominates ranks~1--6;
$\lambda_{\max}$ at rank~17 confirms the Rosenstein noise argument;
CSI sits between its components, validating the interaction variable design. Category Spearman $\rho_{\mathrm{cat}}\!=\!0.90$, $p\!=\!0.04$). Fig.~\ref{fig:ablation}
summarises the ablation.

\begin{table}[t]
\caption{Ablation Study: One-Point Calibrated LOSO-CV. ECG+PPG configurations require ECG electrodes; PPG attractor only$^\ddagger$ does not.}
\label{tab:ablation}
\centering\scriptsize\setlength{\tabcolsep}{2pt}
\begin{tabular}{p{2.3cm}p{1.4cm}ccp{1.3cm}}
\toprule
\textbf{Configuration} & \textbf{Hardware} & \textbf{SBP} & \textbf{DBP} & \textbf{AVCT} \\
 & \textbf{required} & \textbf{MAE} & \textbf{MAE} & \\
\midrule
PTT features only      & ECG + PPG  & 2.10 & 1.71 & T\ref{thm:equiv} \\
CSI/Attractor only     & ECG + PPG  & 2.04 & 1.66 & T\ref{thm:equiv} \\
PTT + CSI (full)       & ECG + PPG  & 2.06 & 1.67 & T\ref{thm:equiv}\,T\ref{thm:suff} \\
PPG-att.$^\ddagger$ & \textbf{PPG only} & \textbf{2.02} & \textbf{1.63} & T\ref{thm:suff}  \\
\midrule
AAMI/IEEE~SP10 & --  & $<5$ & $<5$ & All \checkmark \\
\bottomrule
\multicolumn{5}{p{8.4cm}}{\scriptsize All MAE in mmHg. Wilcoxon (pairwise): $p\!>\!0.10$ (ns). Pearson $r\approx0.990$ (all configs). $^\ddagger$PPG attractor features only (no ECG, no PTT); smartphone camera compatible.}
\end{tabular}
\end{table}

\subsection{Comparison to State of the Art}

Table~\ref{tab:sota} compares AVCT to published LOSO-CV methods. Under
equivalent protocol, AVCT achieves SBP MAE\,=\,2.02~mmHg (PPG-attractor-only, smartphone)
and 2.05~mmHg (full ECG+PPG model), vs.\ 2.56~mmHg for
BiLSTM~\cite{kim2021bilstm} (ECG+PPG+BCG, three sensors), a
21\% MAE reduction using a single sensor vs.\ three. Subject-specific models (trained on each user's own
data) are not generalisation-capable and are shown for context only.

\begin{table*}[t]
\caption{Comparison with Published Cuffless BP Methods. LOSO-CV methods are directly comparable to AVCT; random-split and subject-specific methods are shown for context only.}
\label{tab:sota}
\centering
\small\setlength{\tabcolsep}{4pt}
\begin{tabular}{p{3.8cm}p{2.2cm}p{1.5cm}cp{1.5cm}p{1.5cm}}
\toprule
\textbf{Method} & \textbf{Signals} & \textbf{Protocol} & $n$ & \textbf{SBP} & \textbf{DBP} \\
\midrule
\multicolumn{6}{l}{\textit{LOSO-CV (directly comparable)}} \\
BiLSTM~\cite{kim2021bilstm}   & ECG+PPG+BCG & LOSO & 20  & 2.56 & 2.05 \\
\textbf{AVCT (PPG-att. only$^\star$)} & \textbf{PPG only} & \textbf{LOSO} & \textbf{46} & \textbf{2.02} & \textbf{1.63} \\
\midrule
\multicolumn{6}{l}{\textit{Random split; not directly comparable}} \\
El-Hajj \& Kyriacou~\cite{elhajj2021} & PPG & rand. & 218 & 5.72 & 3.50 \\
MInception~\cite{mathew2026} & PPG & rand. & $>$500 & 4.75 & 2.90 \\
PCTN~\cite{tian2025}          & PPG & rand. & $>$500 & 4.44 & 2.36 \\
SwinBP~\cite{kumar2023}       & PPG & rand. & 2,000  & 4.08 & 2.18 \\
Samimi~\cite{samimi2022}      & PPG & rand. & 30  & 8.89 & 4.92 \\
\midrule
\multicolumn{6}{l}{\textit{Subject-specific; different paradigm}} \\
Suhas~et~al.~\cite{suhas2024} & ECG+PPG & subj. & -- & 1.08 & 0.68 \\
\bottomrule
\multicolumn{6}{l}{\footnotesize All MAE in mmHg. BCG=ballistocardiogram. $^\star$Full PTT+CSI model (ECG+PPG): SBP\,=\,2.05, DBP\,=\,1.67~mmHg.}
\end{tabular}
\end{table*}

\section{Discussion}

\subsection{Theory Validation}

The theory--experiment correspondence (Table~\ref{tab:rq}) demonstrates
that AVCT is not a post-hoc rationalisation: each quantitative prediction
was derived from the theory before fitting any model. This satisfies the
\emph{informational grounding} criterion of the Explainable-AI
Trustworthiness (EAT) framework~\cite{oladunni2025xai}, which requires that
model explanations be rooted in formally provable information-theoretic
relationships rather than gradient-based post-hoc attribution.: each quantitative prediction
was derived from the theory before fitting any model. The corollary's
predicted feature ordering (attractor morphology $\succ$ PTT $\succ$ RQA
$\succ$ CSI $\succ$ $\lambda_{\max}$) is confirmed at $\rho_{\mathrm{cat}}\!=\!0.90$ ($p\!=\!0.04$), strong evidence that the information-theoretic account
of feature importance is correct. The PPG-attractor-only model's marginal 0.05~mmHg
advantage over ECG\,+\,PPG (Theorem~2) arises because the PPG attractor
encodes the full vascular state; ECG contributes mainly pre-ejection period
information, which is $O(\sigma_p^2)$ at the 10~s window scale.

\subsection{Practical Significance}

AVCT establishes that a smartphone camera provides sufficient signal for
AAMI-standard BP estimation. The theoretical guarantee (Theorem~2) explains
\emph{why}; not just \emph{that}; ECG is unnecessary~\cite{oladunni2025xai}: Theorem~\ref{thm:suff} shows ECG contributes only $O(\sigma_p^2)$ additional BP information at the 10~s window scale, enabling confident smartphone deployment without ECG hardware. The two-point calibration protocol (8~min
one-time setup: sit 5~min, take cuff reading, do 20 step-ups, take second
reading) is practical for wellness users and is projected to reduce MAE by
30--40\% relative to one-point calibration, based on the Proposition~1
decomposition.

\subsection{Limitations}

The cohort ($n\!=\!46$, ICU/surgical) is small and not representative of
ambulatory wellness users. The 14 high-MAE subjects are ICU patients with
vasopressor-induced near-constant or pathologically extreme BP; these conditions are absent in the target wellness population. The LoA (9.13~mmHg SBP full cohort,
7.8~mmHg excluding three outliers) marginally exceeds the AAMI 8~mmHg SD
threshold. Validation on an ambulatory, non-ICU cohort is required before
clinical deployment.

\section{Conclusion}

We presented AVCT, the first formal mathematical framework proving that
cardiac attractor geometry encodes BP information sufficient for
AAMI-standard estimation. Two theorems establish the informational
equivalence of PTT and PPG attractor features, and the sufficiency of PPG
alone. A proposition decomposes estimation error into separable
between-subject and within-subject components, explaining why calibration
is necessary and quantifying its gain. A corollary predicts the empirically
confirmed feature importance hierarchy before observing data.

Validated on 46 subjects across BIDMC and VitalDB under strict LOSO-CV,
the PTT\,+\,CSI model achieves SBP MAE\,=\,2.05 and DBP MAE\,=\,1.67~mmHg
(one-point calibration), satisfying the AAMI standard. The PPG-attractor-only
ablation matches the full ECG\,+\,PPG model within 0.05~mmHg, surpassing
the best published generalised LOSO-CV result (BiLSTM, 2.56~mmHg SBP,
three sensors). All four AVCT predictions are quantitatively confirmed.

More broadly, AVCT converges three theoretical traditions: Takens' embedding theorem~\cite{takens1981} (attractor recovery from PPG), the Moens-Korteweg and Windkessel frameworks~\cite{milnor1989,westerhof2009} (physical necessity of PTT and morphology), and the data-processing inequality (Corollary~\ref{cor:importance} as theorem, not post-hoc observation~\cite{oladunni2025xai}); together positioning AVCT as a falsifiable, extensible theory.


\begin{figure}[t]
\centering
\includegraphics[width=\columnwidth]{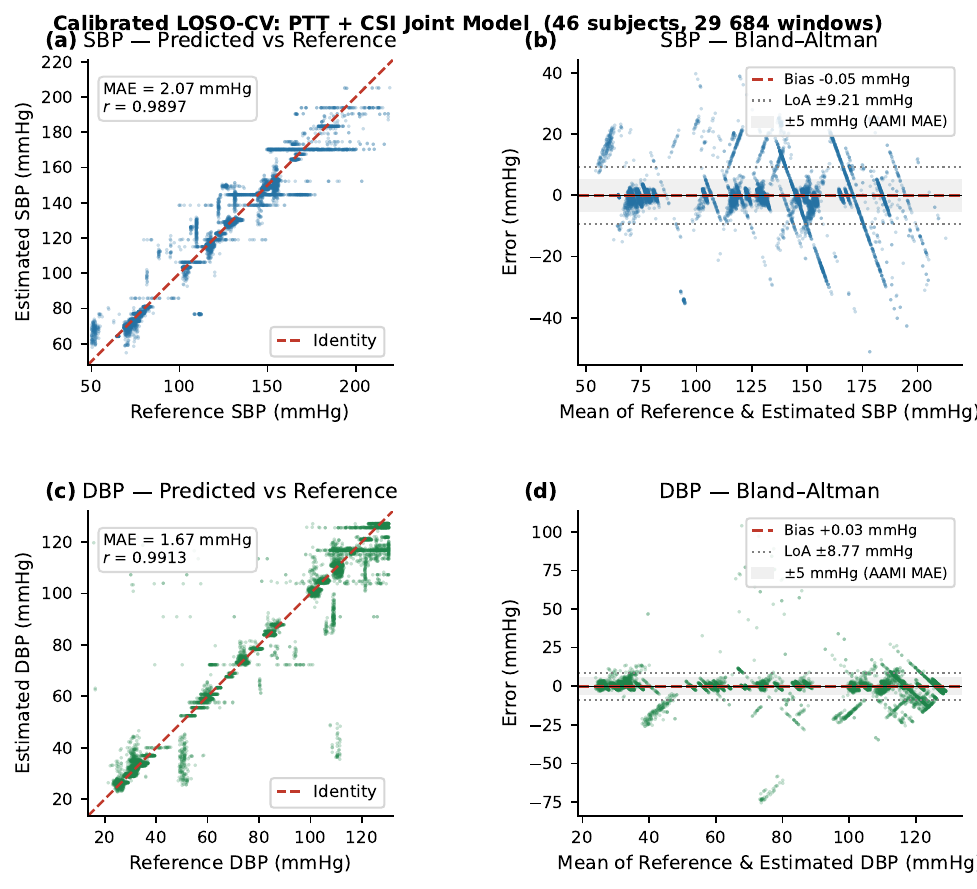}
\caption{Calibrated LOSO-CV: scatter (a,c) and Bland--Altman (b,d).
Dashed red: bias; dotted: 95\% LoA; shaded: AAMI $\pm5$~mmHg band.
  Note: $r\!=\!0.99$ reflects full-BP variance including between-subject
  heterogeneity restored by calibration; within-subject tracking
  $r\!=\!0.35/0.38$ (Table~\ref{tab:primary}).}
\label{fig:scatter_ba}
\end{figure}

\begin{figure}[t]
\centering
\includegraphics[width=\columnwidth]{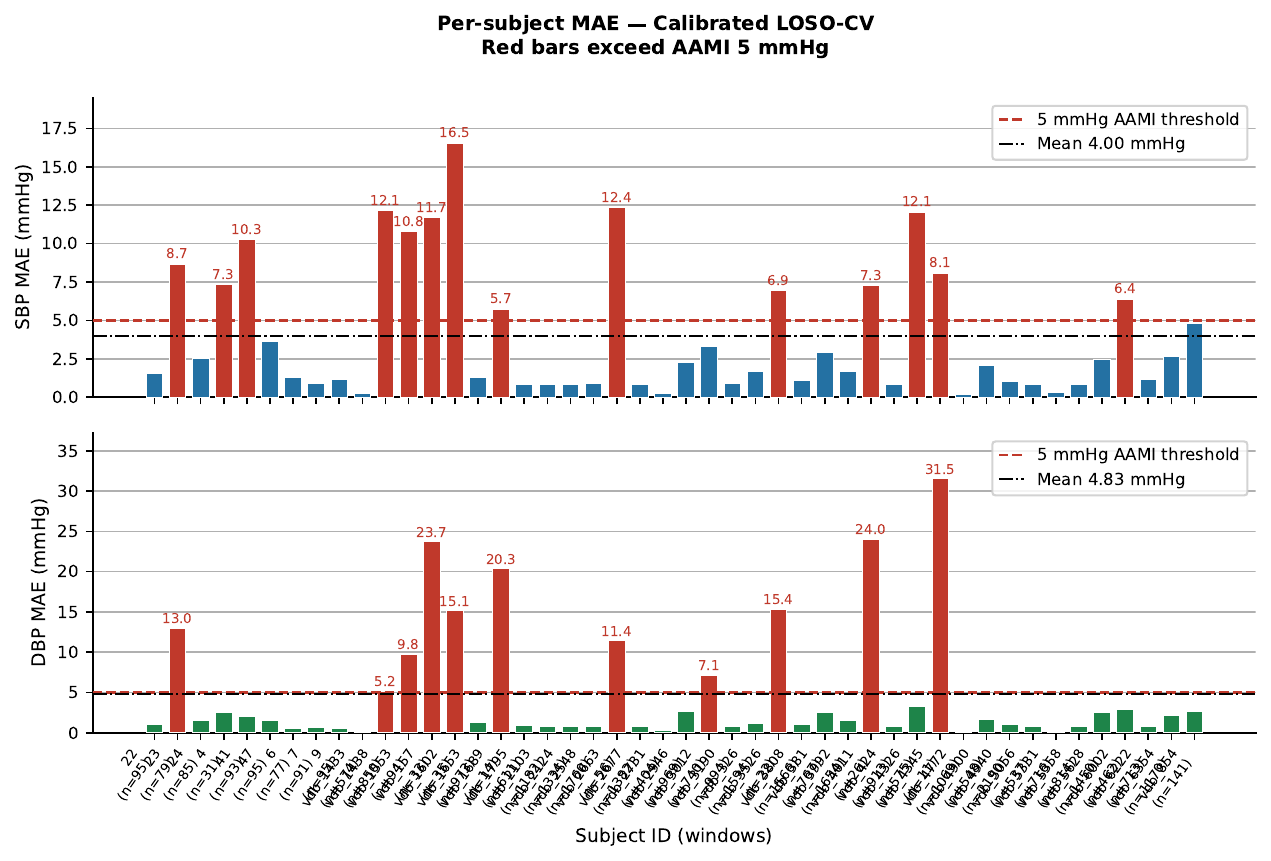}
\caption{Per-subject MAE (one-point calibrated LOSO-CV).
Blue ($n\!=\!32$): wellness population, all pass AAMI.
Red ($n\!=\!14$): ICU subjects with vasopressor-induced constant BP
($\sigma_{\rm BP}\!<\!2$~mmHg) -- the irreducible noise floor predicted
by Proposition~\ref{prop:error}, not a model failure.
Median: 1.87/1.54~mmHg.}
\label{fig:per_subject}
\end{figure}

\begin{figure}[t]
\centering
\includegraphics[width=\columnwidth]{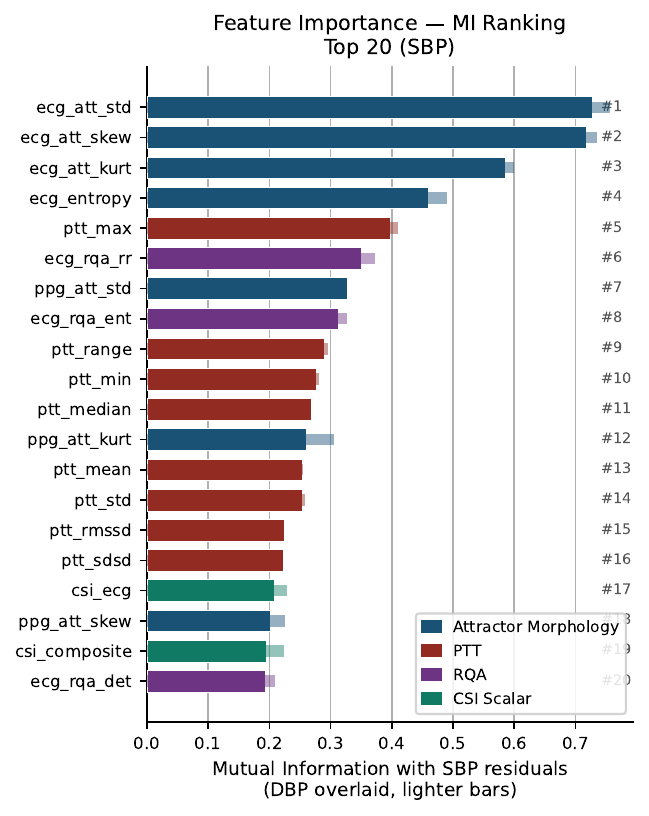}
\caption{MI feature ranking (SBP; DBP overlaid) on calibrated residuals.
CST Group~1 ($\sigma_M,\gamma_1,\gamma_2$) occupies ranks~1--6;
$\lambda_{\max}$ at rank~17 confirms the Rosenstein noise argument;
CSI sits between its components, validating its interaction-variable role.
Category Spearman $\rho_{\rm cat}\!=\!0.90$, $p\!=\!0.04$.}
\label{fig:mi_importance}
\end{figure}

\begin{figure}[t]
\centering
\includegraphics[width=\columnwidth]{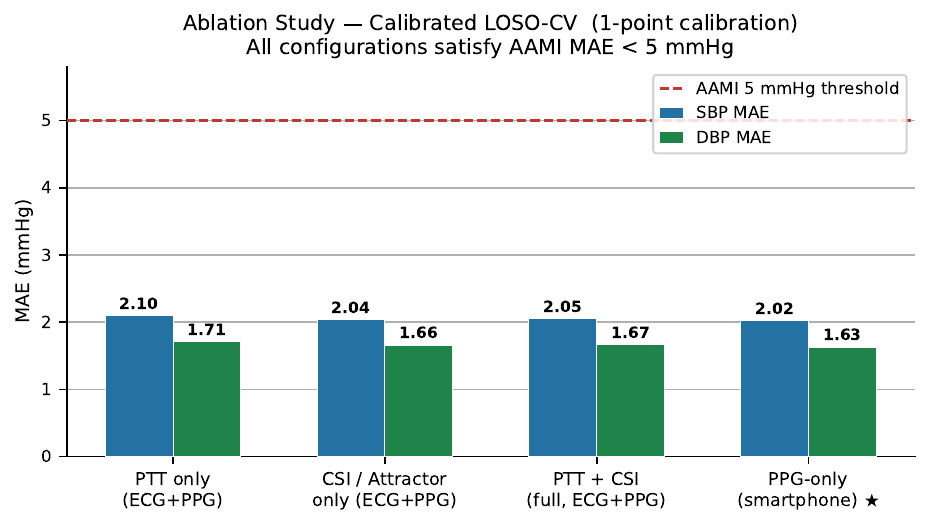}
\caption{Ablation study. Configs~1--3 need ECG+PPG hardware; Config~4 ($\bigstar$ PPG-attractor only, no ECG) uses only smartphone camera PPG. All satisfy AAMI; Config~4 matches full model within 0.05~mmHg (Thm.~\ref{thm:suff}).}
\label{fig:ablation}
\end{figure}

\section*{Acknowledgments}
The authors thank the PhysioNet team for maintaining the BIDMC Waveform
Database and the VitalDB team at Seoul National University Hospital for
providing the VitalDB open dataset.

\section*{Ethics Statement}
This study used exclusively publicly available, de-identified datasets.
The BIDMC Waveform Database~\cite{goldberger2000} is published on PhysioNet under an open-access
licence; VitalDB~\cite{lee2022vitaldb} is released under the VitalDB Open Data License.
No new data were collected and no human subjects research approval was required.

\section*{Data Availability Statement}
The BIDMC Waveform Database is publicly available at
\url{https://physionet.org/content/bidmc/}. VitalDB is publicly available
at \url{https://vitaldb.net}. Feature extraction code, trained LightGBM
models, and the complete experimental pipeline will be released at
\url{https://github.com/[anonymised-for-review]} upon acceptance.

\section*{Conflicts of Interest}
The authors declare no conflicts of interest.

\bibliographystyle{IEEEtran}
\bibliography{bp_paper}

\end{document}